\journal{Signal Processing}
\newtheorem{thm}{Theorem}
\newtheorem{lem}[thm]{Lemma}
\newdefinition{rmk}{Remark}
\newtheorem{prop}{Proposition}
\newtheorem{definition}{Definition}
\newcommand{\PL}{\mathrm{PL}}
\newcommand{\bs}{\boldsymbol}
\newcommand{\bR}{{\mathbb R}}
\newcommand{\bE}{{\mathbb E}}
\newcommand{\E}{{\mathbb E}}
\newcommand{\cF}{{\mathcal F}}
\newcommand{\cK}{{\mathcal K}}
\newcommand{\la}{\langle}
\newcommand{\ra}{\rangle}
\renewcommand{\thetable}{\arabic{table}}
\renewcommand{\thefigure}{\arabic{figure}}
\begin{document}

\begin{frontmatter}



\title{Distributed on-line multidimensional scaling \\ for self-localization in wireless sensor networks}

 \author[ad1]{G.~Morral\fnref{fn1}}
 \ead{morralad@telecom-paristech.fr}


 \author[ad1]{P. Bianchi}
 \ead{bianchi@telecom-paristech.fr}

 \fntext[fn1]{The work of G. Morral is supported by DGA (French Armement Procurement Agency) and the Institut Mines-Télécom.
This work has been supported by the ANR grant ODISSEE.}

 \address[ad1]{Institut Mines-Télécom, Télécom ParisTech, CNRS  LTCI.
46, rue Barrault, 75013 Paris, France.}


\begin{abstract}
  The present work considers the localization problem in wireless
  sensor networks formed by fixed nodes. Each node seeks to estimate
  its own position based on noisy measurements of the relative
  distance to other nodes. In a centralized batch mode, positions
  can be retrieved (up to a rigid transformation) by applying
  Principal Component Analysis (PCA) on a so-called similarity matrix
  built from the relative distances. In this paper, we propose a
  distributed on-line algorithm allowing each node to estimate its own
  position based on limited exchange of information in the network. Our
  framework encompasses the case of sporadic measurements and random
  link failures. We prove the consistency of our algorithm in the case
  of fixed sensors. Finally, we provide numerical and experimental
  results from both simulated and real data. Simulations issued to
  real data are conducted on a wireless sensor network testbed.
\end{abstract}

\begin{keyword}
Principal component analysis \sep Wireless sensor networks \sep Distributed stochastic approximation algorithms \sep Localization \sep Multidimensional scaling \sep Received signal strength indicator


\end{keyword}

\end{frontmatter}


\section{Introduction} 
The problem of self-localization involving low-cost radio devices in
WSN can be viewed as an example of the internet of things (IoT). The
evolution in the last $50$ years of the embedded systems and smart
grids has contributed to enable the WSN integrates the emerging system
of the IoT. Recently, advanced applications to handle specific tasks
require the support of networking features to design cloud-based
architectures involving sensor nodes, computers and other remote
component. Among the large range of applications, location services
can be provided by small devices carried by persons or deployed in a
given area, \emph{e.g.} routing and querying purposes, environmental
monitoring, home automation services.

In this paper we investigate the problem of localization in wireless
sensor networks (WSN) as a particular application of principal
component analysis (PCA).
We assume that wireless sensor devices are able to obtain
received signal strength indicator (RSSI) measurements that can be related to a ranging model
depending on the inter-sensor distances. The multidimensional scaling mapping method (MDS-MAP)
consists in applying PCA to a so-called similarity matrix constructed from
the squared inter-sensor distances. Then, the sensors' positions can be recovered 
(up to a rigid transformation) from the principal components of the similarity matrix~\cite{mds:borg},~\cite{shang:2003}. As opposed to time difference of arrival (TDOA) and angle of arrival
(AOA) techniques, the MDS-MAP approach allows to recover the network configuration
based on the sole RSSI, and can be used without any additional hardware or/and synchronization
specifically devote to self-localization.

MDS-MAP has been extensively studied in the literature (see Section~\ref{sec:state5} for an overview).
The algorithm is generally implemented in a centralized fashion. This requires the presence of a fusion
center which gather sensors' measurements, computes the similarity matrix, performs the PCA, and eventually
sends the positions to the respective sensors. In this paper, we provide a fully {\bf distributed}
algorithm which do not require RSSI measurements to be shared. In addition, our algorithm can be used {\bf on-line}.
By on-line, we mean that the current estimates of the sensors' positions are updated each time
new RSSI measurements are performed, as opposed to batch methods which assume that measurements are collected \emph{prior}
to the localization step. Therefore, although we assume throughout the paper that the sensors' 
positions are fixed, our algorithm has the potential to
be generalized to moving sensors, with aim to track positions while sensors are moving.

The paper is organized as follows. In Section~\ref{sec:fmw}, we provide the network and the observation models.
We also provide a brief overview of standard
self-localization techniques for WSN. Section \ref{sec:MDS} presents the centralized version of the MDS-MAP algorithm.
The proposed distributed MDS-MAP
algorithm is provided in Section~\ref{sec:dmds}.  An additional
refinement phase is also proposed in Section~\ref{sec:dmle} where our
MDS-MAP algorithm is coupled with a distributed maximum-likelihood
estimator. In Section~\ref{sec:sim}, numerical experiments based on
both simulated and real data are provided. Section~\ref{sec:conclusion} gives
some concluding remarks.

\section{The framework}
\label{sec:fmw}
\subsection{Network model}

Consider $N$ agents (\emph{e.g.} sensor nodes or other electronic
devices) seeking to estimate their respective positions defined as
$\{\bs z_1,\cdots,\bs z_N\}$ where for any $i$, ${\bs z}_i\in \bR^p$ with
$p=2$ or 3. 
We assume that agents have only access to noisy measurements of their relative RSSI values. 
More precisely, each agent $i$ observes some RSSI measurements $P_{i,j}$ associated with
other agents $j\neq i$. Here, $P_{i,j}$ is a random function of the Euclidean distance
$d_{i,j}= \|\bs z_i - \bs z_j\|$ between nodes $i$ and $j$. 
The statistical model relating RSSI values to inter-sensor distances is provided in the next paragraph.

The goal is to design a distributed and
on-line algorithm to enable each sensor node to estimate its position
$\bs z_i$ from noisy measurements of the distances. 
Before going further in the description of the RSSI statistical model,
it is worth noting that the localization problem is in fact
ill-posed. Since the only input data are distances, exact positions
are identifiable only up to a rigid transformation. Indeed, quantities
$(d_{i,j})_{\forall i,j}$ are preserved when an isometry is applied to
the agents' positions, \emph{i.e.} rotation and translation. The
problem is generally circumvented by assuming a minimum number of
\emph{anchors} or also named \emph{landmarks} (sensor nodes whose
GPS-positions are known), \emph{e.g.} $M=3$ or $4$ when $p=2$, and
considering these prior knowledge to identify the indeterminacy.
This point is further discussed in Section~\ref{sec:state5}.

\subsection{Received signal model}
\label{sec:problem5}

We rely on the so-called log-normal shadowing model (LNSM) to model
RSSI measurements as a function of the inter-sensor distance \cite{rappaport:2002}.
We define the average path loss $\PL(d)$ at a distance
$d$ expressed in $\SI{}{dB}$ as $\PL(d) = \PL_0 + 10\eta \log_{10}
\frac{d}{d_0}$, where the parameters $\eta$, $d_0$ and $\PL_0$ depend
on the environment (see Section~\ref{sec:sim}).
Given that the distance between sensors $i$ and $j$ is $d_{i,j}$, we define the
RSSI between $i$ and $j$ as a random variable $P_{i,j}$ satisfying
\begin{align}
\label{eq:rssi}
P_{i,j}  = -\PL(d_{i,j}) + \epsilon_{i,j}
\end{align}
where $(\epsilon_{i,j}:i\neq j)$ are thermal noises assumed independent with zero mean and variance $\sigma^2$.
Assume that a given agent $i$ is provided with $T$ independent copies $P_{i,j}(1),\dots, P_{i,j}(T)$ of the random variable $P_{i,j}$
and let $\bar P_{i,j} = T^{-1}\sum_{t=1}^TP_{i,j}(t)$ be the empirical average. An unbiased estimate of the
 squared distance $d_{i,j}^2$ is given by
\begin{equation}
\label{eq:d2nobias}
{D}(i,j) = \frac{10}{C^4}^{\frac{-\bar{P}_{i,j}-\PL_0}{5\eta}} 
\end{equation}
where  $C=10^{\frac{\sigma^2 \ln10}{2T(10\eta)^2}}$. 
Indeed, it can be easily checked that the mean and variance of the unbiased estimator~\eqref{eq:d2nobias} are respectively: $\bE[{D}(i,j)] = d_{i,j}^2$ and $\bE[({D}(i,j)-d_{i,j}^2)^2] = d_{i,j}^4(C^8-1)$.
The construction of unbiased estimates of squared distance will be the basic ingredient of our distributed MDS-MAP algorithm.

\subsection{Overview of some localization techniques}
\label{sec:state5}

Several overview papers have been published in the last ten years dealing with the classification of the localization techniques (see \cite{patwari:2005} or \cite{mao:2007}). In some situations, localization is made easier by the presence of \emph{anchor}-nodes whose
positions are assumed perfectly known. Other methods, called anchor-free, do not require the presence of such landmarks.

%

\textbf{Anchor-based methods: }The classical techniques involve the
resolution of a single unknown position of a sensor node at a time by
means of RSSI values following the LNSM coming from a fixed number of
surrounding anchor nodes or landmarks. Since the sensor node only uses
the information from known positions, its position can be expressed in
absolute coordinates, \emph{i.e.} anchor positions in
GPS-coordinates. When considering a noisy scenario, several works
coupled the classical methods (\emph{trilateration},
\emph{multilateration}~\cite{multi:2011} or\emph{min-max}
\cite{savvides:2002}) with a least squares problem. In particular,
\cite{niculescu:2001}, \cite{savarese:2002} and
\cite{savvides:2002} consider multi-hop communications between the
sensor nodes. Other approaches focus on the statistical distribution
of the received RSSI measurements coming from the landmarks. The goal
is to consider a parametric model for the received signal and to apply
maximum likelihood estimator (MLE). Most works consider the
LNSM (see for instance \cite{patwari:2001} or
\cite{kamol:2004}) while others assume alternative
statistical models (see \cite{mle:2010} or \cite{amy:2013}).

\textbf{Anchor-free methods: }The configuration of the network can be
recovered on a relative coordinate system instead of the GPS absolute
coordinate system. When distances between nodes are view as similarity
metrics, the positioning problem is referred to multidimensional
scaling (MDS). The aim is
to find an embedding from the $N$ nodes such that distances are
preserved. In classical MDS~\cite[Chapter 12]{mds:borg}
positions are obtained by principal component analysis (PCA) of a
 $N\times N$ matrix constructed from the Euclidean distances. If
distances are issued to some noise, \emph{e.g.} estimated from RSSI
measurements as~\eqref{eq:d2nobias}, \cite{shang:2003} propose a
MDS-MAP algorithm based on the classical MDS problem. Indeed, the WSN
localization problem is solved by enabling each sensor node to infer all
the estimated pairwise distances. Alternative approaches within the
localization context are based on optimization techniques. In metric
MDS, positions are obtained by the stress majorization
algorithm  SMACOF (see \cite[Chapter 8]{mds:borg} and
\cite{leeuw:1977}). Alternatively, semidefinite programming (SDP)
can be used as in \cite{biswas:2006:cent}.

The latter approaches have been also addressed in a distributed
setting without the presence of a central processing unit. A distributed
batch version of the SMACOF algorithm based on a round-robin
communication scheme is proposed in \cite{costa:2005}. Since
\cite{costa:2005} considers the minimization of the non-convex stress
function, the same distributed approach (batch and incremental) is
presented in \cite{cedric:2007} but using a quadratic criterion which
includes the information from the anchor nodes to overcome the
non-convex issue. The Authors of \cite{biswas:2006:cent} propose a
distributed implementation of their SDP-based localization
algorithm. In~\cite{biswas:2006:dist} the network is divided in
several clusters of at least two anchor nodes and a large number of
sensor nodes and then the SDP problem is addressed locally at each
cluster. More recently, gossip-based algorithms have been proposed in
\cite{sayed:2010}, \cite{gianackis:2009} to solve the distributed
optimization problem via Kalman filtering and gradient descent
approaches. Other works address the distributed WSN localization
problem using the multidimensional scaling (MDS) method based on
PCA. The MDS-MAP proposed in \cite{shang:2003} is later improved in
\cite{shang:2004}. In~\cite{shang:2004} each sensor node applies the
MDS-MAP of \cite{shang:2003} to its local map and then the local maps
are merged sequentially to recover the global map. Alternatively,
in~\cite{montanari:2011} and~\cite{cdc:2012} a sparsification matrix model on the observations is
introduced to decentralized the PCA step.

\section{Centralized MDS-MAP}
 \label{sec:MDS}

\subsection{Centralized batch MDS}
\label{sec:batchmds}

Define $\bs S$ as the $N\times N$ matrix of square relative distances \emph{i.e.}, $\bs S(i,j) = d_{i,j}^2$. Define $\overline {\bs z}=\frac 1N\sum_{i=1}^N\bs z_i$ as the center of mass (or~\emph{barycenter}) of the agents. Upon noting that $d_{i,j}^2 = \|\bs z_i-\overline{\bs z}\|^2+\|\bs z_j-\overline {\bs z}\|^2 - 2\la \bs z_i-\overline {\bs z},\bs z_j-\overline {\bs z}\ra$, one has:
\begin{equation}
\bs S = \bs c\bs 1^T + \bs 1\bs c^T-2 \bs Z\bs Z^T
\label{eq:S}
\end{equation}
where $\bs 1$ is the $N\times p$ matrix whose components are all equal to one, $\bs c = (\|\bs z_1-\overline {\bs z}\|^2,\cdots, \|\bs z_N-\overline{\bs z}\|^2)^T$ and the $i$th line of matrix $\bs Z$ coincides with the row-vector $\bs z_i-\overline {\bs z}$.
Otherwise stated, the $i$th line of $\bs Z$ coincides with the \emph{barycentric coordinates} of node $i$.
Define $\bs J=\bs 1\bs 1^T/N$ as the orthogonal projector onto the linear span of the vector $\bs 1 = (1,\dots,1)^T$.
Define $\bs J_\bot = \bs I_N-\bs J$ as the projector onto the space of vectors with zero sum, where $\bs I_N$ is the $N\times N$ identity matrix. 
It is straightforward to verify that $\bs J_\bot \bs Z = \bs Z$. Thus, introducing the matrix
\begin{align}
\label{eq:M}
\bs M\triangleq -\frac 12\bs J_\bot \bs S \bs J_\bot\,,
\end{align}
equation~(\ref{eq:S}) implies that $\bs{M}=\bs Z \bs Z^T$. In
particular, $\bs{M}$ is symmetric, non-negative and has rank (at most)
$p$. The agents' coordinates can be recovered from $\bs{M}$ (up to a
rigid transformation) by recovering the principal eigenspace of
$\bs{M}$ \emph{i.e.}, the vector-space spanned by the $p$th principal
eigenvectors (see~\cite[Chapter 12]{mds:borg}). 


 Denote by $\{\lambda_k\}_{k=1}^N$ the eigenvalues of $\bs{M}$ in
 decreasing order, \emph{i.e.} $\lambda_1\geq \cdots \geq
 \lambda_N$. In the sequel, we shall always assume that $\lambda_p>0$. 
Denote by $\{\bs
 u_k\}_{k=1}^p$ corresponding unit-norm $N\times 1$ eigenvectors. Set
 $\bs Z = (\sqrt{\lambda_1} \bs u_1,\cdots,\sqrt{\lambda_p}\bs
 u_p)$. Clearly $\bs{M}=\bs Z\bs Z^T = \bar {\bs Z}\bar {\bs Z}$ and $\bar{\bs Z} = \bs R\bs Z$ for
 some matrix $\bs R$ such that $\bs R\bs R^T=\bs I_N$.  Otherwise
 stated, $\bar{\bs Z}$ coincides with the barycentric coordinates $\bs Z$ up
 to an orthogonal transformation. In particular, the $i$th row of matrix $\bar{\bs Z}$
is an estimate of the position of the $i$th sensor (up to the latter transformation common to all sensors).
In practice, matrix $\bs S$ is
 usually not perfectly known and must be replaced by an estimate
 $\bs{\widehat S}$. This yields the Algorithm~\ref{alg:mds}
 (see~\cite[Chapter 12]{mds:borg}).

\footnotesize

\renewcommand{\algorithmicrequire}{\textbf{Initialize:}}
\newcommand{\INITIALIZE}{\REQUIRE}
\renewcommand{\algorithmicensure}{\textbf{Iterate:}}
\newcommand{\UPDATE}{\ENSURE}
  \begin{algorithm}
\caption{Centralized batch MDS-MAP for localization}
  \label{alg:mds}

{\bf Input}: Noisy estimates of the square distances $D(i,j)$~\eqref{eq:d2nobias} for all pair $i,j$.

1. Compute matrix $\bs{\widehat S}= (D(i,j))_{i,j=1,\dots,N}$.

2. Set $\bs{\widehat{M}} = -\frac 12\bs J_\bot \bs{\widehat S}\bs  J_\bot$.

3. Find the eigenvectors $\{\bs u_k\}_{k=1}^p$ and eigenvalues $\{\lambda_k\}_{k=1}^p$ of $\bs{\widehat{M}}$.

{\bf Output}: $\bs{\widehat Z}=(\sqrt{\lambda_1} \bs u_1,\cdots,\sqrt{\lambda_p}\bs u_p)$
\end{algorithm}

\normalsize


 \subsection{Centralized on-line MDS}
\label{sec:cmds}

In the previous batch Algorithm~\ref{alg:mds}, measurements are made prior to the estimation of the coordinates. From now on, observations are not stored into the system's memory: they are deleted after use. Thus, agents gather measurements of their relative distance with other agents and, simultaneously, estimate their position. 

\subsubsection{Observation model: sparse measurements}
\label{sec:obs}

We introduce a collection of independent r.v.'s $(P_{i,j}(n) :
i,j=1,\cdots,N,\,n\in \mathbb N)$ such that each $P_{i,j}(n)$ follows
the LNSM described in Section~\ref{sec:problem5}. At time $n$, it is possible to define
an unbiased estimate $\bs
D_n(i,j)$ the squared distance as $\bs D_n(i,j)=
\frac{10}{C^4}^{\frac{-P_{i,j}(n)-\PL_0}{5\eta}}$ in the sense that
$\bE[\bs D_n(i,j)] = d_{i,j}^2$. We use the convention that $\bs D_n(i,i)=0$.

\begin{definition}[Sparse measurements]
\label{def:obs}
At each time instant $n$, we assume that with probability $q_{ij}$, an
agent $i$ is able to obtain an estimate $\bs S_n(i, j)$ of the square
distance with an other agent $j \neq i$ and makes no observation
otherwise. Thus, one can represent the available observations as the
product $\bs S_n(i,j) = \bs A_n(i,j)\bs D_n(i,j)$ where $(\bs A_n)_n$
is an i.i.d. sequence of random matrices whose components $\bs
A_n(i,j)$ follow the Bernoulli distribution of parameter
$q_{ij}$. Stated otherwise, node $i$ observes the $i$th row of matrix
$\bs A_n\circ \bs D_n$ at time $n$ where $\circ$ stands for the
Hadamard product.
\end{definition}

\begin{lem}
\label{lem:sn}
  Assume $q_{ij}>0$ for all pairs $i,j$. Set $\bs W := [q_{ij}^{-1}]_{i,j=1}^N$ and let $\bs A_n$, $\bs S_n$ be defined as above.
The matrix
\begin{equation}
\label{eq:Sn}
\bs S_n = \bs W\circ \bs A_n\circ\bs  D_n
\end{equation}
is an unbiased estimate of $\bs S$ \emph{i.e.}, $\mathbb E[\bs S_n] = \bs S$.
\end{lem}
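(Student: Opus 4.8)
The plan is to verify the identity $\bE[\bs S_n] = \bs S$ entrywise, exploiting the fact that the Hadamard product $\circ$ acts componentwise, so that $\bs S_n(i,j) = q_{ij}^{-1}\,\bs A_n(i,j)\,\bs D_n(i,j)$ for every pair $(i,j)$. Since $\bs S(i,j) = d_{i,j}^2$ by definition, it suffices to show that $\bE[\bs S_n(i,j)] = d_{i,j}^2$ for each $i,j$, and then reassemble the entries into the matrix identity.

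First I would treat the off-diagonal terms $i \neq j$. The crucial structural ingredient to invoke is that the link-failure indicators $\bs A_n(i,j)$ are independent of the distance estimates $\bs D_n(i,j)$: the masking process modelling sporadic measurements is governed by the network (Bernoulli link availability) and is unrelated to the thermal noise entering the RSSI readings. Under this independence, the expectation of the product factorizes,
\begin{equation*}
\bE[\bs A_n(i,j)\,\bs D_n(i,j)] = \bE[\bs A_n(i,j)]\,\bE[\bs D_n(i,j)].
\end{equation*}
Here $\bE[\bs A_n(i,j)] = q_{ij}$ because $\bs A_n(i,j)$ is Bernoulli with parameter $q_{ij}$, while $\bE[\bs D_n(i,j)] = d_{i,j}^2$ by the unbiasedness of the per-time-step distance estimate recalled in Section~\ref{sec:obs}. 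Multiplying by the weight $q_{ij}^{-1}$, which is well defined since $q_{ij} > 0$ by hypothesis, then yields
\begin{equation*}
\bE[\bs S_n(i,j)] = q_{ij}^{-1}\,q_{ij}\,d_{i,j}^2 = d_{i,j}^2\,.
\end{equation*}

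It remains to handle the diagonal $i = j$, which is immediate from the convention $\bs D_n(i,i) = 0$: this forces $\bs S_n(i,i) = 0 = d_{i,i}^2 = \bs S(i,i)$. Collecting the diagonal and off-diagonal cases gives $\bE[\bs S_n] = \bs S$. I do not anticipate any genuine obstacle in the computation itself, which is elementary; the only point deserving care is the explicit use of the independence between the masking matrix $\bs A_n$ and the measurement matrix $\bs D_n$, without which the product expectation would not factor and the reweighting by $\bs W$ would fail to produce an unbiased estimator. I would therefore make this independence an explicit part of the observation model before carrying out the factorization.
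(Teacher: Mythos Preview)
Your proof is correct and follows essentially the same approach as the paper: an entrywise verification that uses the independence of $\bs A_n(i,j)$ and $\bs D_n(i,j)$ to factor the expectation, together with $\bE[\bs A_n(i,j)]=q_{ij}$ and $\bE[\bs D_n(i,j)]=d_{i,j}^2$. The paper's argument is slightly terser (it does not separate out the diagonal case), but the substance is identical.
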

\begin{proof}
Each entry of matrix $\bs S_n$, $\bs S_n(i,j)$, is equal to $1/q_{ij}\bs A_n(i,j)\bs D_n(i,j)$. As the random variables $\bs A_n(i,j)$ and $\bs D_n(i,j)$ are independent, by the above definition of $\bs D_n$ and $\mathbb E[\bs A_n(i,j)] = q_{ij}$, then $\mathbb E[\bs S_n(i,j)] = d_{i,j}^2$.
\end{proof}
As a consequence of Lemma~\ref{lem:sn}, an unbiased estimate of $\bs M$ defined in \eqref{eq:M} is simply obtained by
$\bs{M}_n = -\frac 12 \bs J_\bot \bs S_n\bs J_\bot$.

\subsubsection{Oja's algorithm for the localization problem}
\label{sec:doja}

When dealing
with random matrices $\bs M_n$ having a given expectation $\bs{M}$,
the principal eigenspace of $\bs M$ can be recovered by 
the Oja's algorithm~\cite{oja:1992}. The latter consists in recursively defining
a sequence $\bs U_n$ of $N\times p$ matrices, which stand for the estimate at time $n$ of the $p$ principal unit-eigenvectors
of $\bs M$. The iterations as firstly introduced in~\cite{oja:1992} are given by:
\begin{equation}
\label{eq:oja}
\bs U_{n} = \bs U_{n-1} + \gamma_{n}\left( \bs M_{n}\bs U_{n-1} - \bs U_n\big(\bs U_{n-1}^T \bs M_{n}\bs U_{n-1}\big)\right)\ ,
\end{equation}
where $\gamma_n>0$ is a step size. Note that in practice, the algorithm is likely to suffer from numerical instabilities.
In~\cite{borkar:meyn:2012}, a renormalization step is introduced to avoid unstabilities. 
As this approach seems difficult to generalize in a distributed context, it is more adequate in our context to 
introduce a reprojection step in~(\ref{eq:oja}) of the form
$$
\bs U_{n} = \Pi_{\mathcal K}\left[\bs U_{n-1} + \gamma_{n}\left( \bs M_{n}\bs U_{n-1} - \bs U_n\big(\bs U_{n-1}^T \bs M_{n}\bs U_{n-1}\big)\right)\right]\ ,
$$
where $\Pi_\cK$ is a projector onto an arbitrarily large convex compact set $\cK$ chosen large enough to include all matrices whose columns have unit-norm.
Typically, we set ${\mathcal K} = [-\alpha,\alpha]^p \times \dots \times
[-\alpha,\alpha]^p$ where  $\alpha>1$.

In order to obtain an estimate of the sensors positions, we also need to estimate the principal eigenvalues
in addition to the eigenvectors.
Let $\bs
u_{n,k}$ denote the $k$th column of matrix $\bs U_n$. 
Define the quantity $\lambda_{n,k}$ recursively by:
\begin{align}
\label{eq:oja2}
 \lambda_{n,k} =\lambda_{n-1,k} +\gamma_n\left(\bs u_{n-1,k}^T\bs M_{n}\bs u_{n-1,k}-\lambda_{n-1,k} \right)\,.
\end{align}
The convergence properties of Oja's algorithm are studied in details in~\cite{oja:1992} and \cite{borkar:meyn:2012}.
Finally, according to step 3 of the batch Algorithm~\ref{alg:mds}, the estimated barycentric coordinates are obtained as:
\begin{align}
\label{eq:oja3}
\bs{\widehat Z}_n = \left(\sqrt{\lambda_{n,1}}\bs u_{n,1},\dots,\sqrt{\lambda_{n,p}}\bs u_{n,p}\right).
\end{align}
The combination of Equations~(\ref{eq:oja})~(\ref{eq:oja2}) and~(\ref{eq:oja3}) provides an on-line 
for MDS-MAP algorithm. However, the computation of matrix ${\bs M}_n$ at each step as well as the matrix products in~(\ref{eq:oja})
require a full amount of centralization.


\section{Distributed on-line MDS-MAP}
\label{sec:dmds}


\subsection{Communication model}

It is clear from the previous section that an unbiased estimate of matrix $\bs M$ is the first step needed to estimate the sought eigenspace. In the centralized setting, this estimate was given by matrix $\bs M_n= -\frac 12\bs J_\bot \bs S_n \bs J_\bot$.
As made clear by the observation model (in Definition~\ref{def:obs}), each node $i$ observes the~$i$th row of matrix~$\bs S_n$. As a consequence, node $i$ has access to the $i$th row-average $\bs{\overline S}_n(i)\triangleq \frac 1N\sum_j\bs S_n(i,j)$. This means that matrix $\bs S_n \bs J_\bot$ can be obtained with no need to further exchange of information in the network. On the other hand, $\bs J_\bot \bs S_n \bs J_\bot$ requires to compute the per-column averages of matrix $\bs S_n\bs J_\bot$, \emph{i.e.} $\frac 1N\sum_j\bs S_n(j,i)$ for all $i$. This task is difficult in a distributed setting, as it would require that all nodes share all their observations at any time.
A similar obstacle happens in Oja's algorithm when computing matrix products, \emph{e.g.} $\bs M_{n}\bs U_{n-1}$ in~\eqref{eq:oja}. To circumvent the above difficulties, we introduce the following sparse asynchronous communication framework.
 In order to derive an unbiased estimate of $\bs M$, let us first remark that for all $i$, $j$,
 \begin{equation}
 \label{eq:Mij}
 \bs M(i,j) =  \frac {\overline {d^2}(i)+\overline {d^2}(j)}{2} -\frac {d_{i,j}^2+\delta }2
 \end{equation}
 where we set $\overline {d^2}(i)\triangleq  \frac 1{N}\sum_{k} d_{ik}^2$ and $\delta\triangleq \frac 1N\sum_i\overline {d^2}(i)$.
 Note that the terms $d_{i,j}^2$ and $\overline {d^2}(i)$ can be estimated by $\bs S_n(i,j)$ and $\bs{\overline S}_n(i)$ respectively.
However, additional communication is needed to estimate $\delta$ since it corresponds to the average value over all square distances.
We define
 \begin{equation}
 \label{eq:hatM}
 \bs{\widehat M}_n(i,j) = \frac { \bs{\overline {S}}_n(i) + \bs{\overline {S}}_n(j)} {2} -\frac {\bs S_n(i,j)+\bs \delta_n(i) }2 
 \end{equation}
where $\bs \delta_n(i)$ is a quantity that we will define in the sequel, and which represents the estimate of $\delta$ at the agent $n$.

We are now faced with two problems. First, we must construct $\bs \delta_n(i)$ as an unbiased estimate of $\delta$. 
Second, we need to avoid the computation of $\bs{\widehat M}_n(i,j)$ for \emph{all} pairs $i,j$, but only to some of them.
In order to provide an answer to these problems, we introduce the notion of asynchronous transmission sequence. Formally,
 \begin{definition}[Asynchronous Transmission Sequence]
  Let $q$ be a real number such that $0<q<1$.  We say that the sequence of random vectors
$T_n = (\iota_n,Q_{n,i}\,:\,i\in\{1,\cdots,N\}, n\in \mathbb N)$ is an Asynchronous Transmission Sequence (ATS) if:
\emph{i)} all variables $(\iota_n,Q_{n,i})_{i,n}$ are independent, \emph{ii)} $\iota_n$ is uniformly distributed on the set $\{1,\cdots,N\}$, \emph{iii)}
   $\forall i\neq \iota_n$, $Q_{n,i}$ is a Bernoulli variable with parameter $q$ \emph{i.e.}, $\mathbb P[Q_{n,i}=1]=q$ and
\emph{iv)} $Q_{n,\iota_n}=0$. 
\label{def:ats}
\end{definition}


Let $(T_n)_n$ denote an ATS defined as above.  At time $n$, we assume that a given node
$\iota_n\in \{1,\dots, N \}$ wakes up and transmits its
local row-average $\bs{ \overline S}_n(\iota_n)$ to other nodes.
All nodes $i$ such that $Q_{n,i}=1$ are supposed to receive the message. For any $i$, we set:
\begin{equation}
 \bs \delta_n(i) = \frac{\bs{\overline S}_n(i)}N + \frac{\bs{\overline S}_n(\iota_n)Q_{n,i}}q\,.\label{eq:ghjkl}
\end{equation}
 
 The following Lemma is a consequence of Definition~\ref{def:ats} along with Lemma~\ref{lem:sn} and equation~(\ref{eq:M}).
 \begin{lem}
   \label{lem:Munbiased}
 Assume that $(T_n)_n$ is an ATS independent of $(\bs S_n)_n$. Let $(\bs{\widehat M}_n)_n$ be the sequence of matrices defined
 by~(\ref{eq:hatM}). 
Then, $\mathbb E[\bs{\widehat M}_n] = M$.
 \end{lem}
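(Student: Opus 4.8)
The plan is to establish the identity entrywise, i.e.\ to show that $\bE[\bs{\widehat M}_n(i,j)] = \bs M(i,j)$ for every pair $(i,j)$ and then to match this against the closed form~\eqref{eq:Mij}. By linearity of expectation applied to~\eqref{eq:hatM}, the computation splits into the four ingredients $\bs{\overline S}_n(i)$, $\bs{\overline S}_n(j)$, $\bs S_n(i,j)$ and $\bs\delta_n(i)$, and I would treat them separately.

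First I would dispatch the three easy terms using Lemma~\ref{lem:sn}. Since $\bE[\bs S_n(i,j)] = d_{i,j}^2$, averaging over a row gives $\bE[\bs{\overline S}_n(i)] = \frac1N\sum_k d_{i,k}^2 = \overline{d^2}(i)$, and likewise $\bE[\bs{\overline S}_n(j)] = \overline{d^2}(j)$. These already reproduce the first bracket of~\eqref{eq:Mij} and the $d_{i,j}^2/2$ contribution, so the whole burden of the proof rests on the single remaining term.

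The only nontrivial step is showing $\bE[\bs\delta_n(i)] = \delta$ for the estimator~\eqref{eq:ghjkl}. Here I would exploit the assumed independence between the ATS $(T_n)_n$ and $(\bs S_n)_n$ by conditioning on $T_n=(\iota_n,Q_{n,i})$: given the transmission variables, $\bs{\overline S}_n(\iota_n)$ is merely a fixed row-average of the independent matrix $\bs S_n$, so $\bE[\bs{\overline S}_n(\iota_n)\,Q_{n,i}\mid T_n] = Q_{n,i}\,\overline{d^2}(\iota_n)$. Taking the outer expectation and then conditioning on $\iota_n$, which is uniform on $\{1,\dots,N\}$, reduces the cross term to $\frac1N\sum_k \overline{d^2}(k)\,\bE[Q_{n,i}\mid \iota_n=k]$.

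The subtle point, which I expect to be the main obstacle, is the correction $Q_{n,\iota_n}=0$ built into Definition~\ref{def:ats}: the conditional mean $\bE[Q_{n,i}\mid \iota_n=k]$ equals $q$ when $k\neq i$ but vanishes when $k=i$. This excludes precisely the $k=i$ term and yields $\bE[\bs{\overline S}_n(\iota_n)Q_{n,i}] = \frac{q}{N}\big(\sum_k \overline{d^2}(k)-\overline{d^2}(i)\big) = q\delta - \frac{q}{N}\overline{d^2}(i)$, where I use $\sum_k \overline{d^2}(k)=N\delta$. Adding the deterministic contribution $\bE[\bs{\overline S}_n(i)]/N = \overline{d^2}(i)/N$ and dividing the cross term by $q$ as prescribed in~\eqref{eq:ghjkl}, the two $\overline{d^2}(i)$ pieces cancel and leave exactly $\delta$. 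Substituting $\bE[\bs\delta_n(i)]=\delta$ together with the three easy expectations back into~\eqref{eq:hatM} then matches~\eqref{eq:Mij} term by term, so that $\bE[\bs{\widehat M}_n(i,j)]=\bs M(i,j)$ for all $i,j$, which is the claim.
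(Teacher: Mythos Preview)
Your proof is correct and follows essentially the same route as the paper's: both argue entrywise, invoke Lemma~\ref{lem:sn} for the three ``easy'' terms, and then compute $\bE[\bs\delta_n(i)]=\delta$ by using the uniform law of $\iota_n$ together with the fact that $Q_{n,i}$ has mean $q$ when $\iota_n\neq i$ and vanishes when $\iota_n=i$. The only difference is presentational: you spell out the conditioning on $T_n$ and the cancellation of the $\overline{d^2}(i)/N$ pieces explicitly, whereas the paper compresses the same computation into a single displayed line.
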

 \begin{proof}
 By Lemma~\ref{lem:sn} the expectation of terms $\bs{\overline {S}}_n(i)$, $\bs{\overline {S}}_n(j)$ and $\bs S_n(i,j)$ are respectively $ \overline {d^2}(i)$, $\overline{d^2}(j)$ and $d_{i,j}^2$. Moreover, by Definition~\ref{def:ats} the expectation of the random term $\bs \delta_n(i)$ is equal to 
 $$
 \bE[\bs \delta_n(i) ] = \frac{1}{N}\bE[\bs{\overline{S}}_n(i)]+\frac{1}{q}\frac{1}{N}\sum_{j\neq i} \bE[\bs{\overline{S}}_n(j)]q=\frac{1}{N}\sum_{i=1}^N\overline {d^2}(i) \, ,
 $$ 
 which coincides with $\delta$. Then, the expectation of each entry of the matrix $\bs{\widehat M}_n$ in~\eqref{eq:hatM} is equal to the corresponding $\bs M(i,j)$ defined in~\eqref{eq:Mij}.
 \end{proof}

 \subsection{Preliminaries: constructing unbiased estimates}

 As we now obtain a distributed and unbiased estimate of $\bs M$, the
 remaining task is to adapt accordingly the Oja's
 algorithm~(\ref{eq:oja}). In this paragraph, we provide the main ideas behind the construction of
our algorithm.

Assume that we are given a current estimate $\bs U_{n-1}$ at time $n$, under the form of a $N\times p$ matrix.
Assume also that for each $i$, the $i$th row of $\bs U_{n-1}$ is a variable which is physically handled by node $i$.
We denote by $\bs U_{n-1}(i)$ the $i$th row of $\bs U_{n-1}$.

Looking at~(\ref{eq:oja}) in more details, 
Oja's algorithm requires the evaluation of intermediate values, 
as unbiased estimates of $\bs M \bs U_{n-1}$ and $\bs U_{n-1}^T\bs M \bs U_{n-1}$.

We consider the previous ATS $(T_n)_n$ involved in~\eqref{eq:hatM}. We assume that the active node $\iota_n$ (\emph{i.e.}, the one which transmits $\bs{\overline S}_n(\iota_n)$) is also able to transmit its local estimate $\bs U_{n-1}(\iota_n)$ at same time. Thus, with probability $\frac{1}{N}$, node $\iota_n$ sends its former estimate
$\bs U_{n-1}(\iota_n)$ and $\bs{\overline S}_n(\iota_n)$ to all nodes~$i$ such that $Q_{n,i}=1$. Then, all nodes compute:
\begin{equation}
\label{eq:Yn}
\bs Y_n(i) = \bs{\widehat M}_n(i,i)\bs U_{n-1}(i) + \frac Nq \bs U_{n-1}(\iota_n)\bs{\widehat M}_n(i,\iota_n)Q_{n,i}
\end{equation}
As it will be made clear below, the $N\times p$ matrix $\bs Y_n$ whose $i$th row coincides with $\bs Y_n(i)$ can be interpreted as an unbiased estimate of $\bs M\bs U_{n-1}$.

Now we introduce the distributed version of the second term $\bs U_{n-1}^T\bs M_n \bs U_{n-1}$. Consider a second ATS $(T_n')_n$ independent of $(T_n)_n$. 
At time $n$, node $\iota_n'$ wakes up uniformly random and broadcasts the product $\bs U_{n-1}(\iota_n')^T\bs Y_n(\iota_n')$ to other nodes. Receiving nodes are those $i$'s for which $Q_{n,i}'=1$. Then, all nodes are able to compute the estimate $p \times p$ matrix as follows: 
\begin{align}
  \label{eq:Sigman}
    \bs \Lambda_n(i) =  \bs U_{n-1}(i)^T\bs Y_n(i) + \frac Nq \bs U_{n-1}(\iota_n')^T\bs Y_n(\iota_n')Q_{n,i}' \,.
\end{align}

\begin{lem}
\label{lem:YnSigman}
Let $(T_n)_n$ and $(T_n')_n$ be two independent ATS. For any $n$, denote by ${\mathcal F}_n$ the $\sigma$-field generated by
$(T_k)_{k\leq n}$, $(T_k')_{k\leq n}$, $(A_k)_{k\leq n}$ and $(D_k)_{k\leq n}$. Let $(\bs U_n)_n$ be a ${\mathcal F}_n$-measurable $N\times p$ random matrix and
let $\bs Y_n$, $\bs \Lambda_n$ be defined as above. Then,
\begin{align*}
  {\mathbb E}[\bs Y_n|{\mathcal F}_{n-1}] = \bs M \bs U_{n-1} \quad \text{and} \quad \mathbb E[\bs \Lambda_n(i)|{\mathcal F}_{n-1}]  =\bs U_{n-1}^T\bs M\bs U_{n-1}\,.
\end{align*}
Under Lemma~\ref{lem:sn}, \ref{lem:Munbiased} and Definition~\ref{def:ats}, the random sequences $\bs Y_n(i)$ and $\bs \Lambda_n(i)$ are unbiased estimates of $\sum_j\bs M(i,j) \bs U_{n-1}(j)$ and $\bs U_{n-1}^T\bs M \bs U_{n-1}$ respectively given $\bs U_{n-1}$. 
\end{lem}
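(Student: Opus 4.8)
The plan is to prove both identities by iterated conditioning, removing one independent source of randomness at a time and using the importance weights $\tfrac Nq$ to undo the sampling law of the ATS. The key structural fact I would invoke is that, conditionally on $\mathcal F_{n-1}$, the three objects $T_n$, $T_n'$ and the measurement pair $(\bs A_n,\bs D_n)$ are mutually independent (Definition~\ref{def:ats} together with the independence hypotheses of Lemma~\ref{lem:YnSigman}), while $\bs U_{n-1}$ is $\mathcal F_{n-1}$-measurable and hence frozen under every conditional expectation below.

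For the first identity I would condition first on $T_n=(\iota_n,Q_{n,\cdot})$ together with $\mathcal F_{n-1}$, averaging only over $(\bs A_n,\bs D_n)$. By Lemma~\ref{lem:sn} the conditional means of $\bs{\overline S}_n(\cdot)$ and $\bs S_n(\cdot,\cdot)$ are $\overline{d^2}(\cdot)$ and $d_{i,j}^2$, and substituting these into~\eqref{eq:hatM} reproduces the entries $\bs M(i,j)$ of~\eqref{eq:Mij} exactly as in the proof of Lemma~\ref{lem:Munbiased}. The delicate step is the outer average over $T_n$: since $\iota_n$ is uniform one has $\mathbb P[\iota_n=j]=1/N$, and $\mathbb E[Q_{n,i}\mid\iota_n=j]=q$ for $j\neq i$ while $Q_{n,\iota_n}=0$. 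The factor $\tfrac Nq$ multiplying the sampled term in~\eqref{eq:Yn} inverts precisely these two probabilities, so the reweighted sum rebuilds $\sum_{j\neq i}\bs M(i,j)\bs U_{n-1}(j)$; because $Q_{n,\iota_n}=0$ excises the diagonal from the sampled term, the missing $j=i$ contribution is supplied deterministically by the first summand $\bs{\widehat M}_n(i,i)\bs U_{n-1}(i)$, and the two assemble into $\sum_j\bs M(i,j)\bs U_{n-1}(j)=(\bs M\bs U_{n-1})(i)$, the claimed row identity.

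Granting the first identity, the second follows by one further and cleaner layer of importance sampling. I would condition on $(\mathcal F_{n-1},T_n,\bs A_n,\bs D_n)$, which freezes the entire matrix $\bs Y_n$ since it does not depend on $T_n'$, and average over the independent ATS $T_n'$. The same weighting argument applied to~\eqref{eq:Sigman}---now with $\iota_n'$, $Q_{n,i}'$ and the factor $\tfrac Nq$---collapses the conditional mean of $\bs\Lambda_n(i)$ to $\sum_k\bs U_{n-1}(k)^T\bs Y_n(k)$, the term $k=i$ again being carried by the deterministic summand $\bs U_{n-1}(i)^T\bs Y_n(i)$. Taking the outer expectation $\mathbb E[\,\cdot\mid\mathcal F_{n-1}]$ and inserting the first identity termwise yields $\sum_k\bs U_{n-1}(k)^T(\bs M\bs U_{n-1})(k)=\bs U_{n-1}^T\bs M\bs U_{n-1}$.

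The step I expect to be the main obstacle is the outer $T_n$-average in the first identity. Unlike in the second identity, the factor $\bs{\widehat M}_n(i,\iota_n)$ appearing in~\eqref{eq:Yn} is \emph{not} independent of the sampling variables that multiply it, because $\bs\delta_n(i)$ in~\eqref{eq:hatM}---through its definition~\eqref{eq:ghjkl}---itself carries $\iota_n$ and $Q_{n,i}$. The product $\bs{\widehat M}_n(i,\iota_n)Q_{n,i}$ therefore generates self-interaction terms containing $Q_{n,i}^2$, which I would first reduce via $Q_{n,i}^2=Q_{n,i}$ and then track through the reweighting to verify that they recombine into the $-\delta/2$ part of $\bs M(i,j)$ rather than leaving a residual factor of $1/q$. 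Carrying out the conditioning in the order ``measurements first, ATS second'' is what keeps this tractable, since it replaces every measurement term by its deterministic mean before one confronts the combinatorics of the transmission indicators.
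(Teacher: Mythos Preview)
Your overall plan—iterated conditioning, freezing $\bs U_{n-1}$, and using the $N/q$ weights to invert the ATS sampling law—is exactly the paper's argument, which writes $\mathbb E[\bs Y_n(i)\mid\mathcal F_{n-1}]=\bs M(i,i)\bs U_{n-1}(i)+\tfrac Nq\cdot\tfrac qN\sum_{j\neq i}\bs M(i,j)\bs U_{n-1}(j)$ in one line and treats $\bs\Lambda_n$ analogously via the second ATS. Your handling of the second identity (condition on $(\mathcal F_{n-1},T_n,\bs A_n,\bs D_n)$ so that $\bs Y_n$ is frozen, then average over the independent $T_n'$) is in fact cleaner than the paper's, which factors the expectation without saying why $\bs Y_n$ can be pulled out.

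Where you go beyond the paper is in flagging the self-interaction in the first identity: because $\bs\delta_n(i)$ in~\eqref{eq:ghjkl} is built from the \emph{same} $(\iota_n,Q_{n,i})$ that multiply $\bs{\widehat M}_n(i,\iota_n)$ in~\eqref{eq:Yn}, the product $\bs{\widehat M}_n(i,\iota_n)Q_{n,i}$ is not ``an unbiased estimate of $\bs M(i,\iota_n)$ times an independent indicator''. The paper's proof silently treats it as if it were, replacing $\bs{\widehat M}_n(i,\iota_n)$ by $\bs M(i,\iota_n)$ under the expectation. Your plan is to check that the $Q_{n,i}^2=Q_{n,i}$ reduction makes everything recombine into the $-\delta/2$ part of $\bs M(i,j)$; if you actually carry this out you will find it does not. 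Conditioning on $\iota_n=j\neq i$ and averaging over $Q_{n,i}$ and the measurements gives
\[
\mathbb E\!\left[\bs{\widehat M}_n(i,\iota_n)\,Q_{n,i}\ \middle|\ \iota_n=j,\ \mathcal F_{n-1}\right]
= q\Bigl(\tfrac{\overline{d^2}(i)+\overline{d^2}(j)}{2}-\tfrac{d_{i,j}^2}{2}\Bigr)
-\tfrac12\Bigl(q\,\tfrac{\overline{d^2}(i)}{N}+\overline{d^2}(j)\Bigr),
\]
whereas $q\,\bs M(i,j)$ would require the last bracket to equal $q\delta$. The residual $\tfrac{\delta}{2}-\tfrac{\overline{d^2}(i)}{2N}-\tfrac{\overline{d^2}(j)}{2q}$ does not vanish, so under the definitions~\eqref{eq:hatM}, \eqref{eq:ghjkl} and~\eqref{eq:Yn} with a single shared ATS, $\bs Y_n(i)$ is \emph{not} an unbiased estimate of $(\bs M\bs U_{n-1})(i)$. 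Your instinct that this is the main obstacle is correct; what is missing is not bookkeeping but a structural fix—e.g.\ using an ATS for $\bs\delta_n$ that is independent of the one driving~\eqref{eq:Yn}, or reweighting the $\bs\delta_n$ contribution—none of which is supplied by either your sketch or the paper's one-line computation.
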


\begin{proof}
For each $i$, we obtain
\begin{align*}
\mathbb E[\bs Y_n(i)|{\mathcal F}_{n-1}] &= \bs M(i,i)\bs U_{n-1}(i) + \frac{N}{q}\frac{q}{N}\sum_{j\neq i}\bs M(i,j)\bs U_{n-1}(j) \\
&=\sum_j\bs M(i,j) \bs U_{n-1}(j) \, ,
\end{align*}
and
\begin{align*}
\mathbb E[\bs \Lambda_n(i)|{\mathcal F}_{n-1}] &= \bs U_{n-1}(i)^T\E[\bs Y_n(i)|{\mathcal F}_{n-1}] + \frac{N}{q}\frac{1}{N}\sum_{j\neq i}\bs U_{n-1}(j)^T
\bE[\bs Y_n(j)|{\mathcal F}_{n-1}] q \\
&= \sum_i \sum_j \bs U_{n-1}(i)^T\bs M(i,j)\bs U_{n-1}(j)
\end{align*}
which corresponds with the square matrix $\bs U_{n-1}^T\bs M\bs U_{n-1}$.
\end{proof}

\subsubsection{Main algorithm}

We are now ready to state the main algorithm.
The algorithm generates iteratively and for any node $i$ two variables $\bs U_{n}(i)$ and $\bs \lambda_n(i)$, according to:
\begin{align}
  \label{eq:updateU}
 & \bs U_{n}(i) = \bs U_{n-1}(i) + \gamma_{n}\left( \bs Y_n(i) - \bs U_{n-1}(i)\bs \Lambda_n(i)\right)\, \\
 \label{eq:vaps}
  & \bs \lambda_n(i) = \bs \lambda_{n-1}(i) + \gamma_n( \text{diag}(\bs \Lambda_n(i)) - \bs \lambda_{n-1}(i))\, .
\end{align}
For the same reasons as before, it is important in practice to introduce a projection step $\Pi_\cK$ in~(\ref{eq:updateU})
to avoid numerical unstabilities.
Finally, as in \eqref{eq:oja3}, each sensor $i$ obtains its estimate position $\bs{\widehat Z}_n(i)$ by: 
\begin{align}
\label{eq:pose}
\bs{\widehat Z}_n(i) = \left(\sqrt{\bs \lambda_{n,1}(i)}\bs u_{n,1}(i),\cdots,\sqrt{\bs \lambda_{n,p}(i)}\bs u_{n,p}(i)\right)
\end{align}
where we set $\bs U_n(i) = ( u_{n,1}(i), \dots, \bs u_{n,p}(i))$.

The proposed algorithm~\eqref{eq:updateU}-\eqref{eq:pose} is summarized in Algorithm~\ref{alg:dmds} below. 
Note that, at each iteration time $n$, only two communications are performed by two randomly selected nodes issued to the ATS's $T_n$ and $T_n'$.

\renewcommand{\algorithmicensure}{\textbf{Update:}}
\begin{algorithm}[h!]
  \caption{Distributed on-line MDS-MAP for localization (doMDS) }
\label{alg:dmds}
\begin{algorithmic}
\UPDATE  At each time $n=1,2,\dots$ 

\textbf{[Measures]:} each sensor node $i$, do:\\

 Makes sparse measurements of their RSSI to obtain $(\bs D_n(i,j))_j$ for some $j$ \\ such that $\, \bs A_n(i,j)=1$ (Definition~\ref{def:obs}). Set
\[\bs S_n(i,j) = \left\{
  \begin{array}{l l}
    q_{ij}^{-1} \bs D_n(i,j) & \text{ if }  \, \, \bs A_n(i,j)=1\\
    0 & \text{ otherwise }
  \end{array}\right.\]
and set $\, \, \bs{ \overline S}_n(i) = \frac 1N \sum_j \bs S_n(i,j)$.

\smallskip

\textbf{[Communication step]:} 

A randomly selected node  $\iota_n$ wakes up, then

$\quad$ \emph{i)} The node $\iota_n$ randomly selected broadcasts
  $\bs U_{n-1}(\iota_n)$ and $\bs{\overline S}_n(\iota_n)$ to \\ nodes $i$ such
  that $Q_{n,i}=1$.

$\quad$ \emph{ii)}   Each node $i$ computes $\bs Y_n(i)$ by (\ref{eq:Yn}).

$\quad$ \emph{iii)}  A node $\iota_n'$ randomly selected broadcasts $\bs U_{n-1}(\iota_n')^T\bs Y_{n}(\iota_n')$ to\\
  nodes $i$ such that $Q_{n,i}'=1$.

$\quad$ \emph{iv)}  Each node $i$ updates $\bs U_n(i)$ by~\eqref{eq:Sigman}-(\ref{eq:updateU}) and $\bs{\widehat Z}_n(i)$ by \eqref{eq:pose}.

\end{algorithmic}
\end{algorithm}

\normalsize

\subsection{Convergence analysis}
\label{sec:convergence}


We make the following assumptions.
The sequence $(\gamma_n)_n$ is positive and satisfies 
$$
\sum_n\gamma_n=+\infty\qquad\text{ and }\qquad\sum_n\gamma_n^2<\infty\,.
$$
Moreover we make the assumption that the sequence $\bs U_n$ remains a.s. in a fixed compact set $\cK$.
It must be emphasized that this assumption is difficult to check in practice. 
As mentioned above, stability can be enforced by means of a projection
step onto $\cK$.


\begin{prop}
\label{prop:meanfield}
For any $\bs U\in {\mathbb R}^{N\times p}$, set $h(\bs U) = \bs M\bs U-\bs U\bs U^T\bs M\bs U$. 
Let $\bs U_n$ be defined by (\ref{eq:updateU}).
There exists a random sequence $\xi_n$ such that, almost surely (a.s.),
$\sum_n\gamma_n\xi_n$ converges and
\begin{align}
\label{eq:algo}
\bs U_n = \bs U_{n-1} + \gamma_n h(\bs U_{n-1})+\gamma_n\xi_n\,.
\end{align}  
\end{prop}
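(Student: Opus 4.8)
The plan is to cast the recursion~\eqref{eq:updateU} into the standard Robbins--Monro form~\eqref{eq:algo} and then to identify $\gamma_n\xi_n$ as the increment of an $L^2$-bounded martingale, whose almost sure convergence follows from a classical martingale convergence argument. First I would stack the row-wise updates~\eqref{eq:updateU} into a single matrix recursion. Denoting by $\bs G_n$ the $N\times p$ matrix whose $i$th row equals $\bs U_{n-1}(i)\bs \Lambda_n(i)$, equation~\eqref{eq:updateU} reads $\bs U_n = \bs U_{n-1} + \gamma_n(\bs Y_n - \bs G_n)$. It is then natural to define
$$\xi_n \triangleq (\bs Y_n - \bs G_n) - h(\bs U_{n-1})\,,$$
so that~\eqref{eq:algo} holds by construction; the whole difficulty is thereby reduced to proving that $\sum_n\gamma_n\xi_n$ converges a.s.

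The second step is to show that $(\xi_n)_n$ is a martingale difference sequence adapted to $(\mathcal F_n)_n$, i.e. $\bE[\xi_n\,|\,\mathcal F_{n-1}]=0$. Here Lemma~\ref{lem:YnSigman} does the essential work. On the one hand, $\bE[\bs Y_n\,|\,\mathcal F_{n-1}]=\bs M\bs U_{n-1}$. On the other hand, since $\bs U_{n-1}(i)$ is $\mathcal F_{n-1}$-measurable and since $\bE[\bs \Lambda_n(i)\,|\,\mathcal F_{n-1}]=\bs U_{n-1}^T\bs M\bs U_{n-1}$ \emph{does not depend on $i$}, one obtains row by row that $\bE[\bs G_n\,|\,\mathcal F_{n-1}] = \bs U_{n-1}\bs U_{n-1}^T\bs M\bs U_{n-1}$. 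Subtracting, $\bE[\bs Y_n-\bs G_n\,|\,\mathcal F_{n-1}] = \bs M\bs U_{n-1}-\bs U_{n-1}\bs U_{n-1}^T\bs M\bs U_{n-1} = h(\bs U_{n-1})$, which is exactly the centering that makes $\bE[\xi_n\,|\,\mathcal F_{n-1}]=0$.

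The third step is a uniform $L^2$ bound: I would show that there is a constant $K$ with $\bE[\|\xi_n\|^2\,|\,\mathcal F_{n-1}]\le K$ a.s. for all $n$. Since $\xi_n$ is the $\mathcal F_{n-1}$-centering of $\bs Y_n-\bs G_n$, it suffices to bound $\bE[\|\bs Y_n-\bs G_n\|^2\,|\,\mathcal F_{n-1}]$. Every factor entering $\bs Y_n$ and $\bs \Lambda_n$ is controlled: the rows $\bs U_{n-1}(i)$ and $\bs U_{n-1}(\iota_n)$ lie in the compact set $\cK$ by assumption and are therefore bounded; the weights $N/q$ are fixed constants; the Bernoulli and node-selection variables are bounded; and the entries $\bs{\widehat M}_n(i,j)$ are affine combinations of the $\bs S_n$ terms, each of which has finite second moment because $\bE[\bs D_n(i,j)^2]=d_{i,j}^4 C^8<\infty$ (cf. the variance computed after~\eqref{eq:d2nobias}). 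Combining these yields the uniform bound $K$, depending only on $\cK$, $\bs M$, $q$, $N$ and $\sigma^2$.

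Finally, $M_n\triangleq \sum_{k=1}^n\gamma_k\xi_k$ is a martingale whose conditional quadratic variation satisfies $\sum_k\gamma_k^2\,\bE[\|\xi_k\|^2\,|\,\mathcal F_{k-1}]\le K\sum_k\gamma_k^2<\infty$ by the step-size assumption; taking total expectations, $\sum_k\bE[\|\gamma_k\xi_k\|^2]<\infty$, so $M_n$ converges a.s. by the $L^2$ martingale convergence theorem, which is precisely the claim that $\sum_n\gamma_n\xi_n$ converges. I expect the genuine obstacle to be the third step, namely verifying that the importance-reweighted terms carrying the $N/q$ factors have uniformly bounded conditional second moments; this is where the compactness of $\cK$ (for the eigenvector iterates) and the finiteness of the fourth moment of the distance estimates (for the $\bs{\widehat M}_n$ entries) are both indispensable, whereas the mean-field identification of the second step is essentially a bookkeeping reorganization of Lemma~\ref{lem:YnSigman}.
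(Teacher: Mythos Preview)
Your proposal is correct and follows essentially the same route as the paper's proof: define $\xi_n$ as the centered noise, invoke Lemma~\ref{lem:YnSigman} to obtain the martingale-difference property, use compactness of $\cK$ together with the moment bound $\bE[\bs S_n(i,j)^2]=q_{ij}^{-1}d_{i,j}^4C^8$ to get a uniform conditional second-moment bound, and conclude via martingale convergence (the paper invokes Doob's theorem, which is the same argument you call the $L^2$ martingale convergence theorem). The only cosmetic difference is that the paper works row by row in~\eqref{eq:erreurs} and expands the cross terms of~\eqref{eq:Exi} explicitly, whereas you stack everything into matrix form; the substance is identical.
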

The proof is provided in the Appendix. We are now in position to state the main convergence result.
\begin{thm}
\label{theo:convergence}
For any $k=1,\cdots,p$, the $k$th column $\bs u_{n,k}$ of $\bs U_n$ converges to an eigenvector of $\bs M$ with unit-norm.
Moreover, for each node $i$, $\bs \lambda_{n,k}(i)$ converges to the corresponding eigenvalue.
\end{thm}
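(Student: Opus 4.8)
The plan is to treat the recursion~\eqref{eq:updateU} as a stochastic approximation scheme and to analyze it through the ODE (mean-field) method. Proposition~\ref{prop:meanfield} already puts the iteration in the canonical Robbins--Monro form $\bs U_n = \bs U_{n-1} + \gamma_n h(\bs U_{n-1}) + \gamma_n \xi_n$ with $h(\bs U) = \bs M\bs U - \bs U\bs U^T\bs M\bs U$ and with $\sum_n \gamma_n\xi_n$ convergent a.s. Combined with $\sum_n\gamma_n = +\infty$, $\sum_n\gamma_n^2 < \infty$ and the a.s. boundedness of $(\bs U_n)_n$ in the compact set $\cK$, these are exactly the hypotheses under which the linearly interpolated process associated with $(\bs U_n)_n$ is an asymptotic pseudo-trajectory of the semiflow generated by the autonomous ODE $\dot{\bs U} = h(\bs U)$. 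The first step is therefore to invoke this general principle (Benaïm's formulation of the ODE method) to conclude that, almost surely, the set of accumulation points of $(\bs U_n)_n$ is a nonempty compact connected internally chain-transitive (ICT) invariant set of this ODE.

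The second step is to analyze the limit ODE, which is precisely Oja's flow. I would recall its variational structure: on the Stiefel manifold $\{\bs U : \bs U^T\bs U = \bs I_p\}$ the vector field $h$ is the gradient (for the canonical metric) of the Rayleigh trace $\bs U \mapsto \tfrac12\,\mathrm{tr}(\bs U^T\bs M\bs U)$, and one checks that this manifold is attracting for the full flow on $\bR^{N\times p}$, so that the trace serves as a strict Lyapunov function on the relevant region. The equilibria of $h$ with orthonormal columns are exactly the matrices whose columns form an orthonormal basis of a $p$-dimensional $\bs M$-invariant subspace; among these, the global maxima of the trace---hence the only stable equilibria---are the orthonormal bases of the principal eigenspace spanned by $\bs u_1,\dots,\bs u_p$, while every other equilibrium is a saddle. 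Here the needed spectral gap $\lambda_p > \lambda_{p+1}$ is automatic, since $\bs M = \bs Z\bs Z^T$ has rank $p$ and $\lambda_p > 0$ is assumed. Because $h$ admits a strict Lyapunov function with finitely many critical values, every ICT set is contained in the equilibrium set, so by the first step $(\bs U_n)_n$ converges to that set.

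The third and decisive step is to rule out convergence to the unstable (saddle) equilibria, so that the limit is a stable one, i.e. an orthonormal basis of the principal eigenspace---the content of the first assertion. I expect this to be the main obstacle: the mere a.s. convergence of $\sum_n\gamma_n\xi_n$ yields an asymptotic pseudo-trajectory but not, by itself, escape from saddles. I would therefore apply an ``avoidance of traps'' theorem (Brandière--Duflo / Pemantle, as stated in Benaïm's survey), which requires verifying that the martingale part of $\xi_n$ is uniformly exciting along the unstable manifold of each saddle, i.e. that its conditional covariance is bounded below in those directions. Establishing this non-degeneracy from the explicit forms~\eqref{eq:Yn}--\eqref{eq:Sigman} of $\bs Y_n$ and $\bs\Lambda_n$---in particular confirming that the Bernoulli selection randomness of the two independent ATS's injects noise into every coordinate direction---is the technical heart of the argument. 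I note that, because $h$ commutes with right-multiplication by any orthogonal $p\times p$ matrix, the stable set is an $O(p)$-orbit and the ``eigenvector'' statement should be understood modulo this indeterminacy within the principal eigenspace (sharpened to individual eigenvectors when the top-$p$ eigenvalues are distinct), which is harmless for the localization goal since positions are only identifiable up to a rigid transformation.

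Finally, the eigenvalue claim follows by a simpler companion argument. Once $\bs U_n$ converges to a stable equilibrium $\bs U^\star$, the update~\eqref{eq:vaps} is a stochastic approximation whose mean field is $\bs\lambda \mapsto \mathrm{diag}(\bs U_{n-1}^T\bs M\bs U_{n-1}) - \bs\lambda$, a globally exponentially stable linear ODE whose equilibrium tracks $\mathrm{diag}(\bs U_{n-1}^T\bs M\bs U_{n-1})$. Since this target converges to the diagonal of $(\bs U^\star)^T\bs M\bs U^\star$, whose $k$th entry is the Rayleigh quotient $(\bs u_k^\star)^T\bs M\bs u_k^\star = \lambda_k$, the same ODE method gives $\bs\lambda_{n,k}(i) \to \lambda_k$. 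That this limit is common to all nodes $i$ reflects the shared conditional mean $\bs U_{n-1}^T\bs M\bs U_{n-1}$ of the per-node estimates $\bs\Lambda_n(i)$ established in Lemma~\ref{lem:YnSigman}.
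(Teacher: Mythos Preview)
Your proposal is correct in spirit but takes a different route from the paper and, more importantly, attempts to prove strictly more than the theorem asserts.

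The paper's proof is much shorter: it exhibits the explicit global Lyapunov function
\[
V(\bs U) = \frac{e^{\|\bs U\|^2}}{\bs U^T\bs M\bs U}
\]
(well-defined away from the degenerate set), checks that $\langle \nabla V(\bs U), h(\bs U)\rangle \leq 0$ with equality only at roots of $h$, and then directly invokes a stochastic-approximation convergence theorem of Delyon together with Proposition~\ref{prop:meanfield}. This immediately yields almost-sure convergence of $(\bs U_n)_n$ to the zero set of $h$, whose characterization from Oja's original work gives that each column is a unit-norm eigenvector of $\bs M$. There is no Stiefel-manifold attraction argument, no ICT machinery, and---crucially---no avoidance-of-traps step.

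Your third step is therefore not needed here, and in fact the paper explicitly disclaims it: the remark immediately following the theorem states that Theorem~\ref{theo:convergence} ``only guarantees that one recovers \emph{some} eigenspace of $\bs M$,'' and that ruling out non-principal limits via avoidance-of-traps techniques is ``out of the scope of this paper.'' You have read the word ``eigenvector'' in the statement as ``principal eigenvector,'' which is the natural expectation but not what is actually claimed. Your steps~1--2 and~4 do suffice for what is stated, and your Bena\"im/ICT route with the Rayleigh trace as Lyapunov function on the Stiefel manifold is a legitimate alternative to the paper's single global Lyapunov function; the latter has the advantage of working directly on $\bR^{N\times p}$ without a separate manifold-attraction argument, while yours gives a clearer geometric picture and a more explicit treatment of the eigenvalue recursion~\eqref{eq:vaps}, which the paper's appendix does not spell out.
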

The proof is provided in the Appendix.

Note that Theorem~\ref{theo:convergence} might seem incomplete in some
respect: one indeed expects that the sequence $\bs U_n$ converges to
the principal eigenspace of $\bs
M$. Instead, Theorem~\ref{theo:convergence} only guarantees that one
recovers \emph{some} eigenspace of $\bs M$. Undesired limit points can
be theoretically avoided by introducing an arbitrary small Gaussian
noise inside the parenthesis of the left hand side
of~(\ref{eq:updateU}) (see Chapter 4 in \cite{borkar:2008}). These so-called avoidance of traps techniques
are however out of the scope of this paper, and numerical results indicate that the principal
eigenspace is indeed recovered in practical situations.


\section{Position refinement: distributed maximum likelihood estimator}
\label{sec:dmle}

In the context of WSN localization, a refinement phase is in general
added (see \cite{savarese:2002},
\cite{savvides:2002}, \cite{shang:2004} or
\cite{costa:2005}). It is usually based on the statistical model relating
the observed RSSI values to the unknown positions, the latter being estimated
in the maximum likelihood sense.
The objective is twofolds. First, maximum likelihood estimation improves the
estimation accuracy obtained by the MDS-MAP approach. Second, as the MDS-MAP only identifies
positions up to a rigid transformation, it allows to eliminate the
residual ambiguity by using anchor nodes, provided that such anchors exist.

In this section, we provide a distributed algorithm in order to locally maximize the likelihood.
It is worth noting that the likelihood function is generally non-convex.
Thus, one cannot expect that a standard gradient ascent provides the maximum likelihood
estimator regardless from the initialization. For this reason, a preliminary phase
such as the proposed doMDS algorithm is essential as an initial coarse estimate, and the algorithm depicted below
is used merely as a fine search in the neighborhood of the doMDS output.


%

\subsection{Likelihood function}
\label{sec:dsa1}

Consider a connected graph $G=(V,E)$ where $V= \{1,\dots,N\}$ is the set of agents and
$E$ is a set of non-directed edges. 
In this paragraph, we allow for the presence of anchor nodes. We let $A\subset\{1,\dots, N\}$ be the set of anchor nodes
\emph{i.e.} for each $k\in A$, the position $\bs z_k$ of node $k$ is assumed to be known. 
Unknown parameters thus reduce to set of coordinates $\bs z=(\bs z_i:i\in \overline A)$ where $\overline A= V\backslash A$.
We denote by ${\mathscr N}_i$ the neighbors of $i$ which belong to $\overline A$ and by ${\mathscr M}_i$ the neighbors
of $i$ which are anchors. We note $\bs z_{\mathscr{N}_i}=(\bs z_j :
  j\in\mathscr{N}_i\cup \{i\})$.
For a connected pair of nodes $\{i,j\}$, 
we let $P_{i,j}(n)$ ($n\in \mathbb N$) be an i.i.d. sequence following the LNSM model of Section~\ref{sec:problem5}.
Equivalently, the quantity $\hat \ell_{i,j}(n) =\frac{-P_{i,j}(n)-\PL_0}{10\eta}$ follows a normal distribution with 
mean $\log_{10}d_{i,j}$ and variance $\frac{\sigma^2}{100\eta^2}$, since $\hat \ell_{i,j}(n)= \log_{10}d_{i,j}+\frac{\varepsilon_{i,j}}{10\eta}$ by using~\eqref{eq:rssi}.
Based on the observations $(\ell_{i,j}(n) : i\sim j)$ at a given time $n$, 
the likelihood associated with the unknown sensors' positions can be decomposed as
$$
{\mathcal L}_n(\bs z) =\sum_{i=1}^N f_{i,n}(\bs z_{\mathscr{N}_i}) 
$$
where
$$
f_i(\bs z_{\mathscr{N}_i}) = \sum_{j\in\mathscr{N}_i}\left( \hat \ell_{i,j}(n) - \log_{10}\|\bs z_i-\bs z_j\|\right)^2+ \sum_{k\in\mathscr{M}_i}\left( \hat \ell_{ik}(n) - \log_{10}\|\bs z_i-\bs z_k\|\right)^2\,.
$$

\subsection{The algorithm: on-line gossip-based implementation}
\label{sec:doMLE}

Following the idea of \cite{tsitsiklis:phd-1984}(see also \cite{dsa:2013} and reference therein),
we propose a distributed consensus-based implementation consisting on
local computations and random communications among the sensor
nodes. The algorithm is given below. The convergence proof is omitted due to the lack of space but follows from the same arguments as~\cite{dsa:2013}.

\renewcommand{\algorithmicrequire}{\textbf{Initialize:}}
\renewcommand{\algorithmicensure}{\textbf{Update:}}
\begin{algorithm}[h!]
  \caption{Distributed on-line MLE (doMLE)}
\label{alg:refin}
\begin{algorithmic}
\UPDATE at each time $n=1,2,\dots$ 

[\textbf{Local step}] each node $i$ obtains $\{P_{i,j}(n)\}_{\forall j\in\mathscr{N}_i} $ and $\{P_{ik}(n)\}_{\forall k\in\mathscr{M}_i} $.

Each sensor $i$ computes a temporary estimate of its position's set:
\vspace{-0.2cm}
$$
\tilde{\bs z}_{\mathscr{N}_i,n} = {\bs z}_{\mathscr{N}_i,n-1} - \gamma_n \nabla f_{i,n}({\bs z}_{\mathscr{N}_i,n-1}) 
$$
\vspace{-0.1cm}
[\textbf{Gossip step}] two uniformly random selected nodes $i\sim j$ in $\overline A$ exchange\\ their temporary estimated positions and average their values
according to:

\vspace{-0.6cm}

\begin{align*}
\!\!\!\!\!\!\!\!\!\!\!\!\!\!\!\!\!\!\!\!\!\!\!\!\!\!\!\!\!\!\!\!\!\!\!\!\forall \ell \in \mathscr N_i\cap \mathscr N_j,\quad {\bs z}_{\mathscr{N}_i,n}(\ell) &= \frac{\tilde{\bs z}_{\mathscr{N}_i,n}(\ell)+\tilde{\bs z}_{\mathscr{N}_j,n}(\ell)}2 \\ 
{\bs z}_{\mathscr{N}_j,n}(\ell) &= {\bs z}_{\mathscr{N}_i,n}(\ell), 
\end{align*}
Otherwise, $\forall \ell \notin \mathscr N_i\cap \mathscr N_j$ or $m \neq i,j$, set ${\bs z}_{\mathscr{N}_m,n}(\ell)=\tilde {\bs z}_{\mathscr{N}_m,n}(\ell)$.
\end{algorithmic}
\end{algorithm}

Algorithm~\ref{alg:refin} uses a standard pairwise averaging between nodes. We note that more involved gossip protocols have been proposed, we mention for instance broadcast and push-sum protocols (see~\cite{dsa:2014} and~\cite{nedic:olshevsky:2014}). Although theoretically possible, such an extension of Algorithm~\ref{alg:refin} is however beyond the scope of this paper.

\section{Numerical results}
\label{sec:sim}
We consider the same network configuration corresponding on the set of $N=50$ sensor nodes selected from the FIT IoT-LAB~\footnote{FIT IoT-LAB \url{https://www.iot-lab.info/}} platform at Rennes. Sensor nodes are located within a $\SI{5\times 9}{m^2}$ area, \emph{i.e.} $p=2$. Six sensors of the $50$ were set as anchor nodes (or landmarks). We compare the performance of our proposed distributed on-line MDS (doMDS) to other existing algorithms. We consider the distributed batch MDS~\cite{costa:2005} (dwMDS) and the classical centralized methods such as: multilateration~\cite{multi:2011} (MC), \emph{min-max}~\cite{savvides:2002}, Algorithm~\ref{alg:mds} in Section~\ref{sec:batchmds} (batch MDS) and the Oja's algoritm \eqref{eq:oja}-\eqref{eq:oja2} described in Section~\ref{sec:cmds}. The three iterative algorithms (Oja's, dwMDS and doMDS) are initialized by randomly chosen positions in $\SI{5\times 9}{m^2}$. 

\subsection{Simulated data}
\label{sec:sim_data}
 
First, we show the results from simulated data drawn according to the observation model defined in Section~\ref{sec:cmds}. In order to compare our proposed algorithm with the distributed MDS proposed by \cite{costa:2005}, we set the same environmental context in which $\sigma/\eta=1.7$. Figure~\ref{fig:comparison_veps} displays the comparison of the root-mean square error (RMSE) when running Algorithm~\ref{alg:dmds} over $300$ independent runs of the estimated positions when considering different communication parameters: $(q_{ij})_{i,j}$ (the Bernoullis related to the observation model~\eqref{eq:Sn}) and $q$ (the Bernoullis related to the ATS in Definition~\ref{def:ats}). Since the variance of the error sequence is upper bounded by the minimum probability value in~\eqref{eq:Exi-1}-~\eqref{eq:Exi-3}, we observe from Figure~\ref{fig:comparison_veps} a trade-off between the accuracy and the number of communications as the RMSE decreases faster when the probability $q$ is closer to $1$.  

 
Figure~\ref{subfig:comparison} shows the comparison of the localization RMSE over $300$ independent runs of the overall estimated positions when considering the three iterative methods: the centralized Oja's \eqref{eq:oja}-\eqref{eq:oja2}, the dwMDS of \cite{costa:2005} and our proposed Algorithm~\ref{alg:dmds}. The estimated positions after $1000$ iterations of the three iterative algorithms are reported in Figure~\ref{fig:sim_networks}. Note that, the result in Figure \ref{subfig:sim_dwdms} requires at least twice the number of communications compared to the results both on-line Oja's approaches. Positions close to the barycentric of the network tend to be more accurate than positions on the surrounding area for the three cases. Nevertheless, Figures~\ref{subfig:sim_comds} and \ref{subfig:sim_domds} show these outer positions better preserved than \cite{costa:2005}. Indeed, our distributed and asynchronous Oja's algorithm achieves in general better accuracy (around the $65$\% of positions) except for the third part of nodes which are located around the network's boundary, \emph{e.g.} nodes $11$ or $36-37$ for instance (see squared nodes in Figure~\ref{subfig:sim_domds}).

\subsection{Real data: FIT IoT-LAB platform of wireless sensor nodes}
\label{sec:real_data}

\subsubsection{Platform description}
In order to obtain real RSSI values we make use of the FIT IoT-LAB platform deployed at Rennes (France). The $256$ WSN430 open nodes\footnote{See the technical specifications of WSN430 sensors \url{https://github.com/iot-lab/iot-lab/wiki/Hardware_Wsn430-node} and CC2420 transceivers involve in our campaigns: \url{http://www.ti.com/lit/ds/symlink/cc2420.pdf}} available at the platform are issued to the standard ZigBee IEEE 802.15.4 operating at $\SI{2.4}{GHz}$. The sensor nodes are located in two storage rooms of size $\SI{6\times 15}{m^2}$ containing different objects. They are placed at the ceil which is $\SI{1.9}{m}$ height from the floor in a grid organization. Through of our user profile created in the FIT IoT-LAB's website, we run remotely several experiments involving the $50$ selected sensor nodes within $\SI{5\times 9}{m^2}$. All real data used in this section can be found in~\footnote{Data base available at G. Morral personal website \url{http://perso.telecom-paristech.fr/~morralad/}}. The environment parameters issued to the LNSM \eqref{eq:rssi} are: $\sigma^2=\SI{28.16}{dB}$, $\PL_0=\SI{-61.71}{dB} $ and $\eta=2.44$. We set $q_{ij}=0.8$ $\forall i,j$, $q=0.85$ and $\gamma_n =\frac{ 0.015}{\sqrt{n}}$ for Algorithm~\ref{alg:dmds}.

\subsubsection{Performance comparison}

We compare the same algorithms considered in Section~\ref{sec:sim_data} by setting the estimated positions obtained from each algorithm to the initialization of Algorithm~\ref{alg:refin}. Table~\ref{tab:comp_ref} shows the RMSE values before and after the refinement phase. In addition, we include the ratio of the accuracy improvement considering the RMSE values after and before applying the distributed MLE and the ratio regarding the number of positions over the total $N$ that are improved. The best performances are achieved by min-max, dwMDS and doMDS in terms of minimum RMSE value over the $N$ estimated positions. Nevertheless, the highest improvement is obtained with the proposed doMDS since the RMSE before the refinement phase was higher than the values from min-max and dwMDS which do not experiment a considerable decrease. In general, the refinement Algorithm~\ref{alg:dmds} improves almost all the positions for each method and especially the anchor-free methods based on the MDS approach. Indeed, the highest values are those from the distributed versions which may exploit in advantage the local knowledge of each sensor node. 

\section{Conclusion}
\label{sec:conclusion}

This paper introduced a novel algorithm based on Oja's algorithm for self-localization in wireless sensor networks. Our algorithm is based on a distributed PCA of a similarity matrix which is learned on-line. Almost sure convergence of the method is demonstrated in the context of vanishing step size. The algorithm can be coupled with a distributed maximum likelihood estimator to refine the sensors positions if needed.  Numerical results have been conducted on both simulated and real data  on a WSN testbed. Although we focused on fixed sensors positions, the on-line nature of the algorithm makes it suitable for use in dynamic environments where one seek to track the position of moving sensors.

{\appendix
\section{Proof of Proposition~\ref{prop:meanfield}}

Set for each $i$, $\sideset{}{_j}\sum \bs M(i,j)\bs U_{n-1}(j)=(\bs M\bs U_{n-1})_i$ and
\begin{align}
\label{eq:erreurs}
&\xi_n(i) = (\bs Y_n(i) -(\bs M \bs U_{n-1})_i )  + \bs U_{n-1}(i)(\bs U_{n-1}^T\bs M\bs U_{n-1} - \bs \Lambda_n(i)) 
\end{align}
Then, the sequence generated by each sensor node $i$ is written as:
\begin{align*}
\bs U_n(i) = \bs U_{n-1}(i) + \gamma_n \left((\bs M\bs U_{n-1})_i -  \bs U_{n-1}(i)(\bs U_{n-1}^T\bs M\bs U_{n-1}) \right)+\gamma_n\xi_n(i)
\end{align*}
Denote by $\cF_n$ the $\sigma$-algebra generated by all random variables defined up to time~$n$.
Using Lemmas~\ref{lem:sn}~\ref{lem:Munbiased} and~\ref{lem:YnSigman}, it is immediate to check
that $\bE(\xi_n|\cF_{n-1})=0$ and thus the sequence $\sum_{k\leq n} \gamma_k\xi_k$ is $\cF_n$-adapted martingale.
We estimate
\begin{align}
\label{eq:Exi}
\mathbb E[\|\xi_n(i)\|^2|\cF_{n-1}]\leq& \, \, \mathbb E[\|\bs Y_n(i)\|^2|\cF_{n-1}] + \|\bs U_{n-1}(i)\|^2\mathbb E[\|\bs \Lambda_n(i)\|^2|\cF_{n-1}] \nonumber\\
&\, \, +2\|\bs U_{n-1}(i)\|\mathbb E[\|\bs Y_n(i)\bs \Lambda_n(i)\| |\cF_{n-1}] \, .
\end{align}
The first term on the right hand side (RHS) of \eqref{eq:Exi} can be expanded as:
\begin{align}
\label{eq:Exi-1}
\mathbb E[\|\bs Y_n(i)\|^2|\cF_{n-1}]\leq &\,\, \mathbb E[|\bs{\widehat M}_n(i,i)|^2]\|\bs U_{n-1}(i)\|^2 \nonumber \\
 &\, \,+ \frac{N}{q}\sum_{j\neq i}\mathbb E[|\bs{\widehat M}_n(i,j)|^2]\|\bs U_{n-1}(i)\|^2 \nonumber \\
 &\, \,+2\sum_{j\neq i}\mathbb E[\bs{\widehat M}_n(i,i)\bs{\widehat M}_n(i,j)]\|\bs U_{n-1}(i)\|^2 \, .
\end{align}
Upon noting that for any $i,j$ $\mathbb E[\bs S_n(i,j)^2]=\frac{1}{q_{ij}}d_{i,j}^4C^8$ and $\bs U_{n-1}$ lies in a fixed compact set, there exists a constant $K'$ such that $\mathbb E[\|\bs Y_n(i)\|^2|\cF_{n-1}]\leq K'$ for all $n$ depending on $N$, $q_{min}=\min_{i,j} q_{ij}$, $C$ defined in \eqref{eq:d2nobias} and $\max_{i,j} d_{i,j}^4$ such that $\mathbb E[|\bs{\widehat M}_n(i,j)|^2]<K$ for some constant $K$. 
The second term on the RHS of \eqref{eq:Exi} can be handled similarly:
\begin{align}
\label{eq:Exi-2}
\mathbb E[\|\bs \Lambda_n(i)\|^2|\cF_{n-1}]& \leq \, \mathbb E[\|\bs Y_n(i)\|^2|\cF_{n-1}]\|\bs U_{n-1}(i)\|^2 +(\frac{N}{q}\sum_{j\neq i}\mathbb E[\|\bs Y_n(j)\|^2|\cF_{n-1}] \nonumber\\
& \quad \, \, +2\sum_{j\neq i}\mathbb E[\bs Y_n(i)\bs Y_n(j)|\cF_{n-1}] )\|\bs U_{n-1}(j)\|^2 \ \ \leq K''
\end{align}
for some constant $K''$.
Finally,
\begin{align}
\label{eq:Exi-3}
\mathbb E[\|\bs Y_n(i)\bs \Lambda_n(i)\| |\cF_{n-1}] \leq & \,  \, \mathbb E[\|\bs Y_n(i)\|^2|\cF_{n-1}]\|\cF_{n-1}(i)\| \nonumber \\
& \, \, +\sum_{j\neq i}\mathbb E[\bs Y_n(i)\bs Y_n(j)|\cF_{n-1}]\|\cF_{n-1}(j)\| 
\end{align}
is uniformly bounded as well. Therefore, we have shown that a.s.
$$
\sup _n  \mathbb E[\|\xi_n(i)\|^2|\cF_{n-1}]<\infty 
$$
Since $\sum_n \gamma_n^2<\infty$, it
follows that $\sum_n \gamma_n^2\mathbb E[\|\xi_n(i)\|^2|\cF_{n-1}]<\infty$ a.s. 
By Doob's Theorem, the martingale $\sum_{k\leq n}  \gamma_k \xi_k(i)$ converges almost surely
to some random variable finite almost everywhere. This completes the proof.

\section{Proof of Theorem~\ref{theo:convergence}}

Consider the following Lyapunov function $V:\mathbb R^{N\times p}\smallsetminus \{0\} \to \mathbb R^+$:
\begin{align}
\label{eq:lyp}
V(\bs U) = \frac{e^{\|\bs U\|^2}}{\bs U^T\bs M\bs U}\,.
\end{align}
The following properties hold:
\begin{enumerate}[label={\roman*)}]
\item $\lim_{\|\bs U\|\to \infty} V(\bs U) = +\infty$ and the gradient is $\nabla V(\bs U) = -2\frac{V(\bs U)}{\bs U^T\bs M\bs U}h(\bs U)$.
\item $\la V(\bs U),h(\bs U) \ra\leq 0$ and the equality holds iff $\{\bs U \in \mathbb R^{N\times p}\, |\,h(\bs U)=0\}$.
\end{enumerate}
The proof is an immediate consequence of Proposition~\ref{prop:meanfield}, the existence of~\eqref{eq:lyp} along with Theorem 2 of \cite{delyon:2000}. Sequence $\bs U_n$ converges a.s. to the roots of $h$. The latter roots are characterized in~\cite{oja:1992}. In particular, $h(\bs U)=0$ implies that each column of $\bs U$ is an unit-norm eigenvector of $\bs M$.
}
%

\section*{References}
  \bibliographystyle{elsarticle-num} 
  \bibliography{bib_longue}

\newpage
\renewcommand\thefigure{\arabic{figure}}
\setcounter{figure}{0}

\renewcommand\thetable{\arabic{table}}
\setcounter{table}{0}

\begin{figure}[h!]
\centering
 \includegraphics[width=0.8\textwidth,height=5cm]{}
 \caption{RMSE as a function of $nN$ from the two estimated eigenvectors $\bs u_{n,1}$ and $\bs u_{n,2}$ when considering the noiseless and noisy case and for different values of~$q$. }
 \label{fig:comparison_veps}
\end{figure}

\begin{figure}[h!]
        \centering
                \begin{subfigure}[b]{\textwidth}
                        \centering
\includegraphics[width=0.65\textwidth,height=5cm]{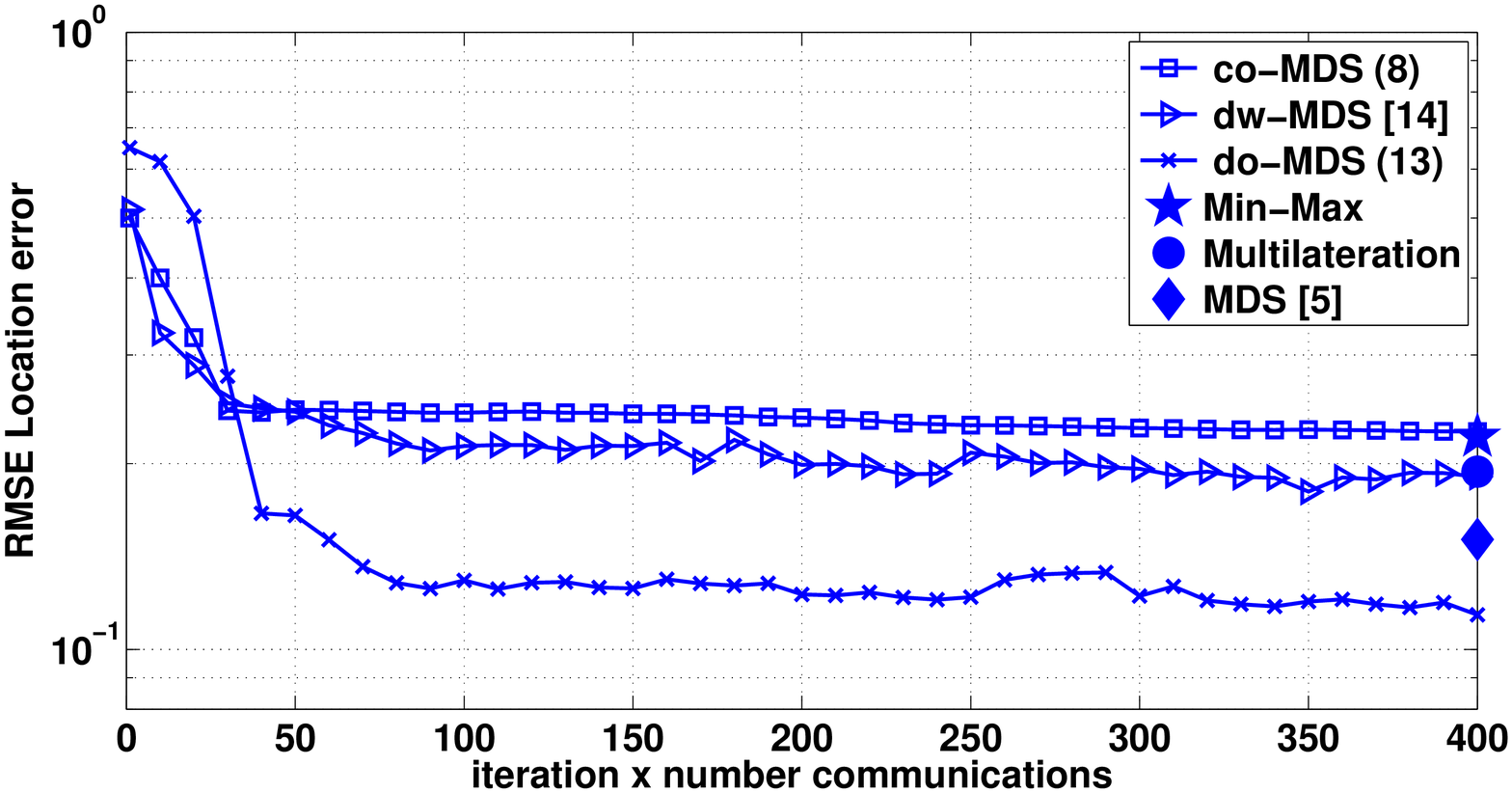}    
\caption{RMSE as a function of $nN$ from the estimated positions $(\bs{\widehat Z}_n(1),\dots, \bs{\widehat Z}_n(N))$. }
        \label{subfig:comparison}
        \end{subfigure} \\
        \begin{subfigure}[b]{0.31\textwidth}
                 \includegraphics[width=\textwidth,height=4.5cm]{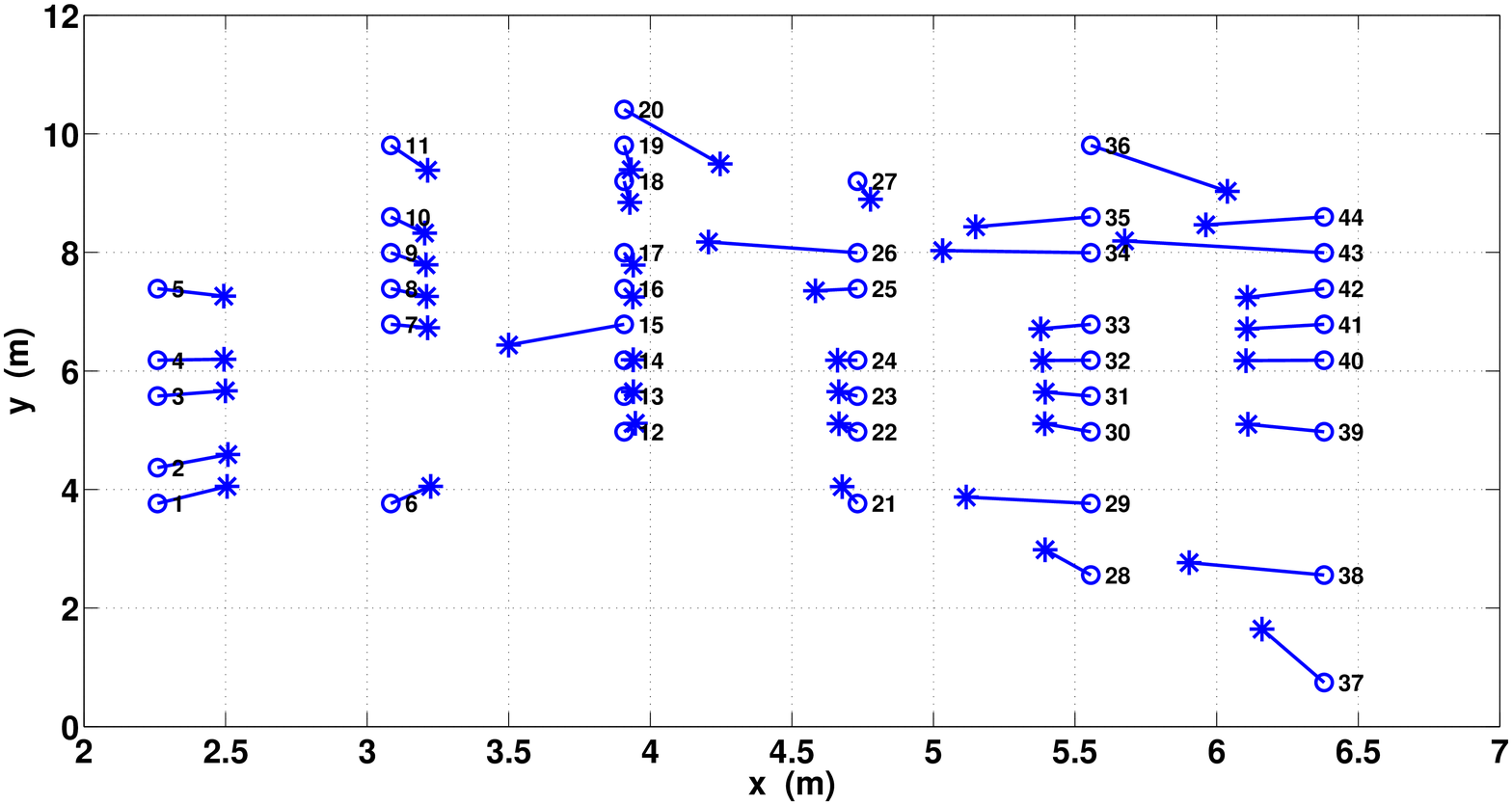}
     \caption{Oja's algorithm~\eqref{eq:oja}-\eqref{eq:oja2}.}
        \label{subfig:sim_comds}
        \end{subfigure}%
        ~ 
        \begin{subfigure}[b]{0.31\textwidth}
                \includegraphics[width=\textwidth,height=4.5cm]{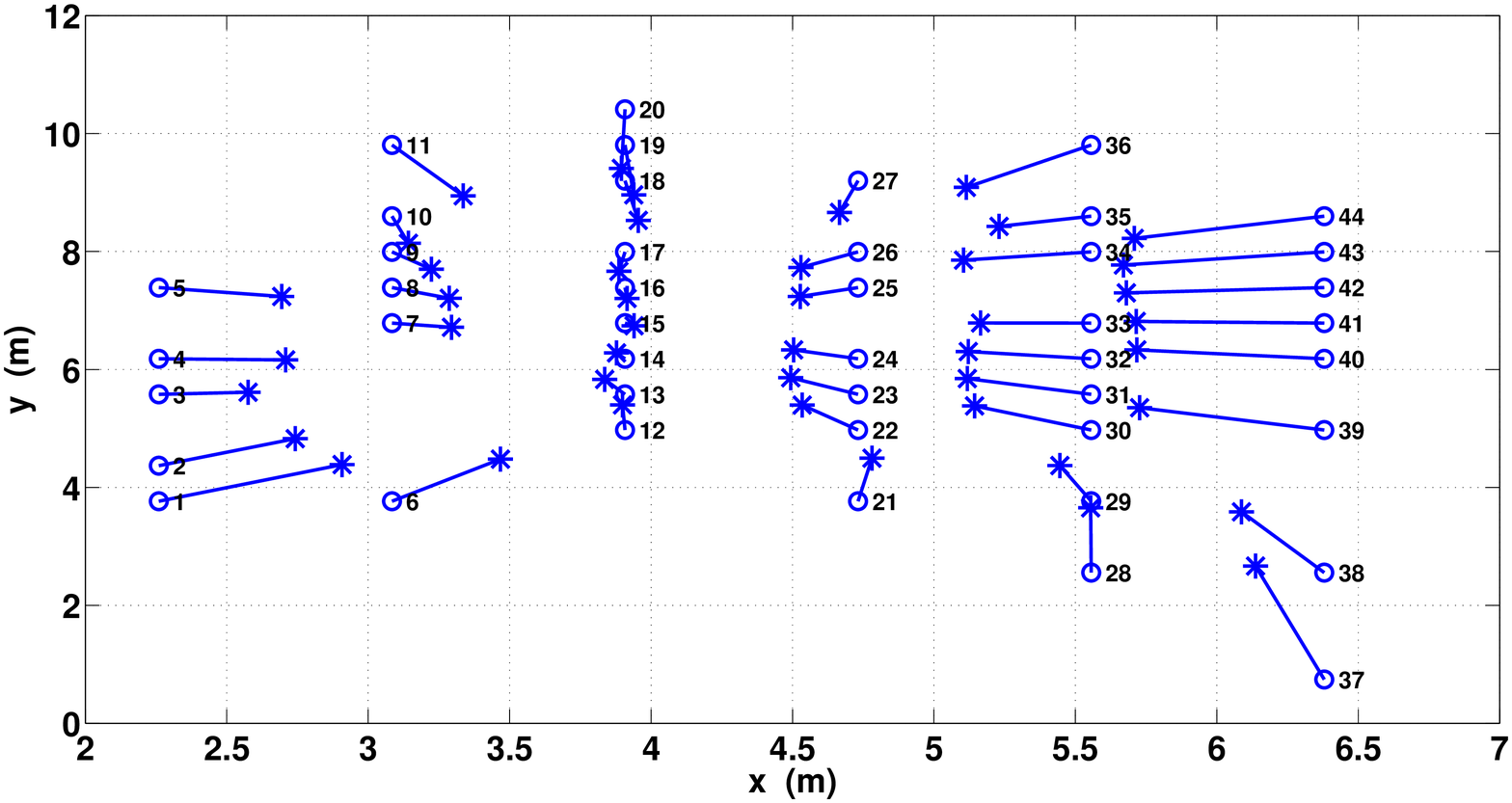}
     \caption{dwMDS \cite{costa:2005}.}
        \label{subfig:sim_dwdms}
        \end{subfigure}
         ~ 
        \begin{subfigure}[b]{0.31\textwidth}
                \includegraphics[width=\textwidth,height=4.5cm]{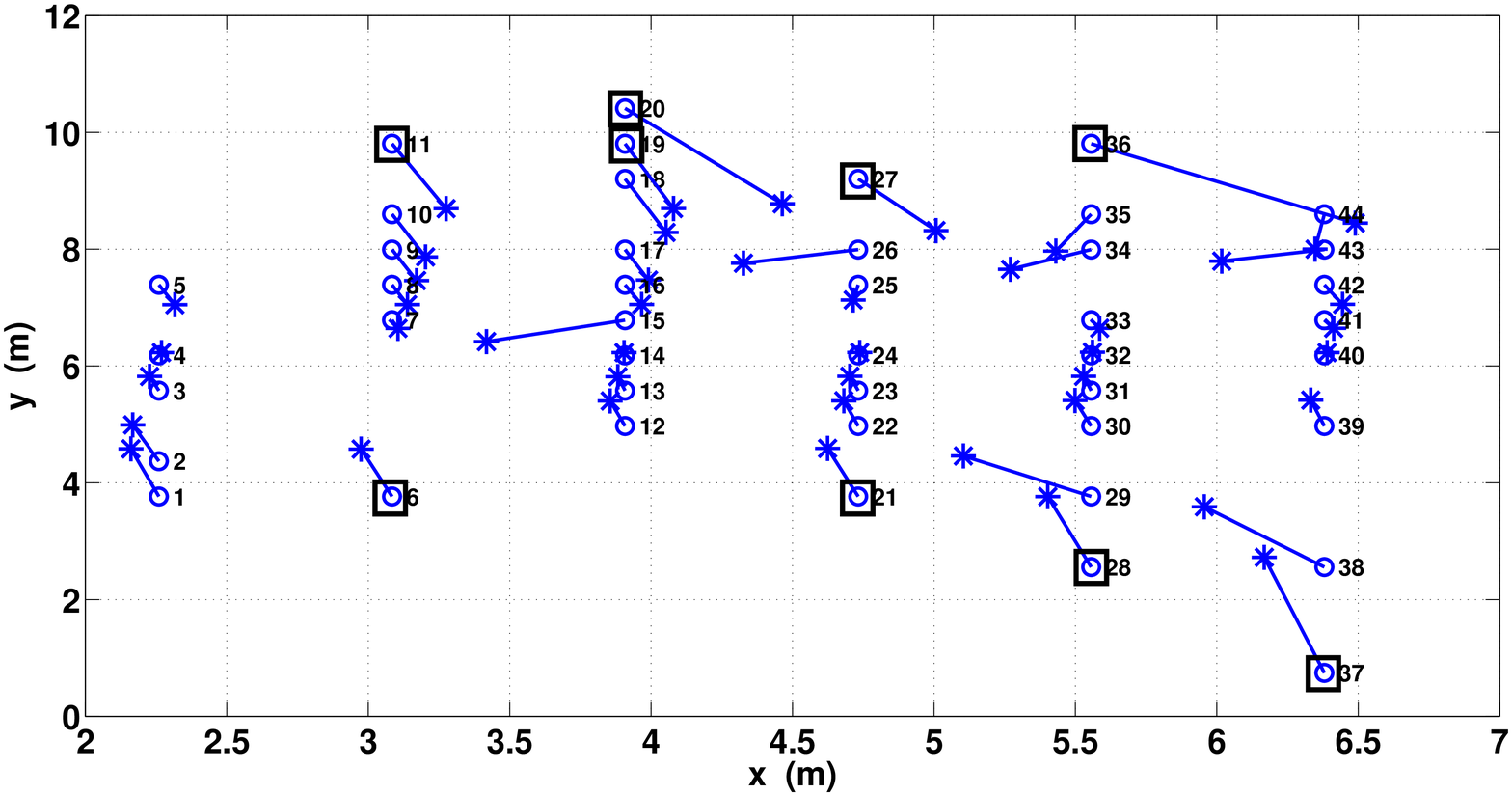}
     \caption{Algorithm~\ref{alg:dmds} (doMDS).}
        \label{subfig:sim_domds}
        \end{subfigure}
       \caption{Estimated positions after $1000$ iterations. Markers ({\color{blue}\ding{81}}) correspond to the estimated values while markers ({\color{blue} $\bs{\ocircle}$}) to the true positions. Squared positions ($\square$) in d) highlight worse accuracy compared to b).}
       \label{fig:sim_networks}
\end{figure}

\begin{table}[h!]
        \centering
         \begin{tabular}{ | l | c | c | c | c | c |c| }
    \hline
      \textbf{Method} & \textbf{MC}& \textbf{min-max}&\textbf{MDS}&\textbf{Oja}& \textbf{dwMDS} & \textbf{doMDS} \\ \hline
      \textbf{Before refinement}  & $1.87$ & $0.8$& $1.98$ & $2.18$& $0.86$& $1.56$\\ \hline
      \textbf{After refinement}  & $1.05$ & $0.54$& $1.39$ & $1.37$& $0.6$& $0.51$ \\ \hline
        \textbf{Improvement (\%) }  & $44$ & $32$& $30$ & $28$& $30$& $78$ \\ \hline
         \textbf{Positions improved (\%) }  & $75$ & $71$& $80$ & $80$& $82$& $86$ \\ \hline
     \end{tabular}
     \caption{RMSE averaged over the $44$ estimated positions considering real data.}
\label{tab:comp_ref}
\end{table}
 
\end{document}